\definecolor{urlcolor}{rgb}{0, 0.5, 0}
\definecolor{citecolor}{rgb}{.5,0,.25}
\definecolor{linkcolor}{rgb}{0,0,1}
\newcounter{withNotes}
\def\note#1{} }{
\def\note#1{\GenericWarning{}%
  {AUTHOR WARNING: Unresolved annotation}%
    {\marginpar%
      [\hfill\llap{\textcircled{\small{$\circledcirc$}}$\!\Longrightarrow$}]%
      {\rlap{$\Longleftarrow\!$\textcircled{\small{$\circledcirc$}}}}}%
  \textsf{$\langle\!\langle${\red{#1}}$\rangle\!\rangle$}%
}
\newcommand{\red}{\textcolor{red}}
\definecolor{OlivierGreen}{rgb}{0.3,0.8,0.4}
\definecolor{Grey}{rgb}{0.7,0.7,0.7}
\definecolor{cyan}{rgb}{0.0, 0.72, 0.92}
\definecolor{Pink}{rgb}{1,0,1}
\definecolor{Red}{rgb}{0.82,0.1,0.26}
\newcommand{\od}[1]{{\color{OlivierGreen}  #1}}
\newtheorem{thm}{Theorem}
\newtheorem{prop}[thm]{Proposition}
\newtheorem{cor}[thm]{Corollary}
\newtheorem{lem}[thm]{Lemma}
\theoremstyle{remark}
\newcommand{\Po}{\ensuremath{P}\xspace}
\newcommand{\Tr}{\ensuremath{\tau}\xspace}
\newcommand{\DT}{\ensuremath{DT(P)}\xspace}
\newcommand{\AT}{\ensuremath{A(\tau)}\xspace}
 \title{Delaunay Triangulations of Points on Circles\thanks{This work
     was supported by the ANR/FNR project SoS,
     INTER/ANR/16/11554412/SoS, ANR-17-CE40-0033.}}
\author{
Vincent Despr\'e
    \thanks{Universit\'e de Lorraine, CNRS, Inria, LORIA, F-54000 Nancy, France.\hfill~\;
 \protect\url{firstname.lastname@inria.fr}}
 \and
Olivier Devillers$^\dag$
 \and
Hugo Parlier%
    \thanks{Mathematics department, University of Luxembourg,
\protect\url{firstname. lastname@uni.lu}}
 \and
Jean-Marc Schlenker$^\ddag$
}
\begin{document}
\maketitle

\begin{abstract}
Delaunay triangulations of a point set in the Euclidean plane are
ubiquitous in a number of computational sciences, including computational geometry.
 Delaunay triangulations are not well
defined as soon as 4 or more points are concyclic but since it is not
a generic situation, this difficulty is usually handled by using a
(symbolic or explicit) perturbation. As an alternative, we propose to
define a canonical triangulation for a set of concyclic points by
using a max-min angle characterization of Delaunay
triangulations. This point of view leads to a well defined and unique
triangulation as long as there are no symmetric quadruples of
points. This unique triangulation can be computed in quasi-linear time
by a very simple algorithm. 
\end{abstract}

\section{Introduction}

Let \Po be a set of points in the Euclidean plane. If we assume that \Po is in general position and in particular do not contain 4 concyclic
 points, then the Delaunay triangulation \DT is the unique
 triangulation over \Po such that the (open) circumdisk of each
 triangle is empty. \DT has a number of interesting properties. The
 one that we focus on is called the {\it max-min angle} property. For
 a given triangulation \Tr, let \AT be the list of all the angles of
 \Tr sorted from smallest to largest. \DT is the triangulation which
 maximizes \AT for the lexicographical
 order~\cite{sibson1978locally,e-acg-87} .
In dimension 2 and for points in general position, this max-min angle property characterizes Delaunay triangulations and highlights one of their most important features: 
they don't contain skinny triangles with small angles.
We call such triangulations max-min angle Delaunay triangulations or
simply Delaunay triangulations as the two notions are equivalent for points in general position.

\smallskip
\noindent\begin{minipage}{0.7\textwidth} 
 We call a quadrilateral $pqrs$ symmetric if there is a symmetry that exchange $p$ with $q$
and $r$ with $s$ or, equivalently, if the two diagonals $pr$ and $qs$ have the same length. 
In such a quadrilateral, the four points are concyclic.
In this paper, we show that considering the max-min angle characterization allows to significantly weaken the notion of general position for \Po. We show the following theorem. 
\end{minipage}\hfill\begin{minipage}{0.25\textwidth}
\includegraphics[width=0.8\textwidth,page=1]{Figures}
\end{minipage}

\begin{thm}\label{thm:uniqueness}
If a set of points in the Euclidean plane does not contain any symmetric quadruples then it has a unique max-min angle Delaunay triangulation.
\end{thm}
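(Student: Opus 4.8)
The plan is to prove this by a local exchange / flip argument combined with a global connectivity argument on the space of triangulations.

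The plan is to combine a single local computation on quadrilaterals with the convex-hull (lifting) structure of empty-circumdisk triangulations, reducing the global statement to a uniqueness statement for one inscribed polygon. The engine is a \emph{local lemma} comparing the two triangulations of a convex quadrilateral $pqrs$ that share the flipped edge. Using the inscribed-angle theorem I would write the six angles of each triangulation: in the concyclic case, with $a,b,c,d$ the four arcs, the two diagonals produce the common angles $\{a/2,b/2,c/2,d/2\}$ together with $\{(a+b)/2,(c+d)/2\}$ for one diagonal and $\{(a+d)/2,(b+c)/2\}$ for the other, so the sorted six-tuples coincide exactly when $a=c$ or $b=d$, i.e.\ exactly when the two diagonals have equal length, i.e.\ exactly for a symmetric quadruple; otherwise one sorted tuple strictly dominates. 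In the non-concyclic case this is the classical Lawson inequality, where the empty-circumdisk diagonal strictly dominates. Thus for a non-symmetric quadruple there is a unique better diagonal, and the local comparison is strict.

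Next I would pass from a global maximizer $\tau^\ast$ of $A(\cdot)$ (which exists because there are finitely many triangulations) to Delaunay structure. Since no single flip can increase $A(\tau^\ast)$, the local lemma forces every flippable edge to be locally Delaunay, hence by the Delaunay Lemma $\tau^\ast$ has all open circumdisks empty. Lifting $P$ to the paraboloid, the lower convex hull is canonically determined and its faces are convex polygons with concyclic vertices; every empty-circumdisk triangulation is obtained by triangulating each such polygonal face independently, and since each face is a fixed polygon with fixed boundary the angle multiset splits as a disjoint union $A(\tau)=\bigsqcup_F A_F$ whose $F$-term depends only on how $F$ is triangulated.

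An elementary monotonicity lemma --- replacing one face's angle list by a lexicographically larger one strictly increases the sorted global list --- then shows that a global maximizer is exactly a choice, for every face, of a triangulation maximizing that face's own angle list. This reduces Theorem~\ref{thm:uniqueness} to the following per-face statement: a convex polygon inscribed in a circle with no symmetric quadruple of vertices has a unique max-min triangulation. I would prove it by induction on the number of vertices, the base cases being the triangle and (via the local lemma) the quadrilateral. Fixing a boundary edge, in an optimal triangulation its incident triangle splits the polygon into two smaller inscribed polygons, each of which must, by independence and monotonicity, be triangulated at its own inductively-unique optimum; what remains is to show the apex of that triangle is forced.

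This last local-to-global step is where I expect the real difficulty. Inside an inscribed polygon every quadruple is concyclic, so the local lemma only says that between two crossing diagonals the longer one is preferred, and one must show that the resulting condition ``every internal edge is the longer diagonal of its quadrilateral'' is met by a unique triangulation. I would attack it with an exchange argument: given two optimal triangulations, take a diagonal present in one but not the other, examine the edges it crosses in the other, and use the strict local lemma to exhibit an improving flip, contradicting optimality; the bookkeeping of which crossing quadrilateral yields the improving flip is the main obstacle I anticipate.
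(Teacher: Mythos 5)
Your reduction to the concyclic case is sound and, if anything, more explicit than the paper's (which passes to a minimal concyclic subconfiguration without spelling out the lifting argument), and your local computation on a concyclic quadrilateral is correct: the two sorted angle $6$-tuples coincide exactly for a symmetric quadruple, and otherwise the longer diagonal strictly wins. The problem is that everything after that --- the per-face uniqueness for an inscribed polygon, which you correctly identify as ``where the real difficulty is'' --- is exactly the content of the theorem, and the strategy you propose for it (given two optimal triangulations, exhibit an improving single flip in one of them) cannot work. For concyclic points, strict local flip-optimality does \emph{not} imply global optimality, so the existence of an improving flip is not something you can extract from non-optimality, let alone from the coexistence of two optima. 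Concretely: take five concyclic points with consecutive arcs proportional to $(0.81,\,0.80,\,0.50,\,0.49,\,0.79)$. All five diagonals have distinct lengths (no symmetric quadruple). The fan from $p_0$, with diagonals $p_0p_2$ and $p_0p_3$, has each diagonal strictly longer than its flip partner ($|p_0p_2|>|p_1p_3|$ and $|p_0p_3|>|p_2p_4|$), so it admits no improving flip; yet the fan from $p_1$, with sorted diagonal list $(|p_1p_3|,|p_1p_4|)$, is lexicographically strictly better than $(|p_0p_3|,|p_0p_2|)$. So your exchange argument, as stated, has a counterexample to its key mechanism already for $n=5$.

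The paper closes this gap by an entirely global argument rather than a local one. Using the equivalence of angle-maximality and length-maximality of the sorted diagonal list (Lemma~\ref{lem:diags}), two distinct optimal triangulations $\Tr_1,\Tr_2$ of a minimal counterexample must be diagonal-disjoint and admit a \emph{length-preserving bijection} $b$ between their $n-3$ diagonals. The no-symmetric-quadruple hypothesis forces $d$ and $b(d)$ to share exactly one endpoint, and forces distinct pairs $(d,b(d))$, $(d',b(d'))$ to share distinct endpoints; since the vertices tucked inside the (at least two) ears of each triangulation cannot serve as such shared endpoints, only $n-4$ vertices are available for $n-3$ pairs, a contradiction. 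If you want to salvage your outline, it is this bijection-plus-counting step (or some equally global comparison of the two diagonal multisets) that has to replace the flip argument; no purely local exchange will do.
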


Notice that the two possible triangulations of a symmetric
  quadrilateral have exactly the same angles (and same diagonal
  length), and thus unicity is impossible when symmetric
  quadrilaterals are allowed.
As an immediate corollary to the theorem above we obtain the following:

\begin{cor}
A set of points in the Euclidean plane with distinct pairwise distances admits a unique max-min angle Delaunay triangulation.
\end{cor}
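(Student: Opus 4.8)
The plan is to reduce the corollary directly to Theorem~\ref{thm:uniqueness} by showing that the hypothesis of pairwise distinct distances rules out the existence of any symmetric quadruple. The only ingredient I need is the equivalent characterization of symmetric quadrilaterals stated just before the theorem: a quadrilateral $pqrs$ is symmetric precisely when its two diagonals $pr$ and $qs$ have the same length.

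First I would argue by contradiction. Suppose \Po is a set of points with all pairwise distances distinct, and suppose toward a contradiction that \Po contains a symmetric quadruple $pqrs$. By the stated characterization, the symmetry forces the two diagonals to be equal in length, so $|pr| = |qs|$. But $\{p,r\}$ and $\{q,s\}$ are two distinct pairs of points of \Po: the four vertices of the quadrilateral are distinct, and these two pairs are disjoint. Hence the distinct-distances hypothesis applies and yields $|pr| \neq |qs|$, a contradiction. Therefore \Po contains no symmetric quadruple.

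Having established that \Po has no symmetric quadruple, I would simply invoke Theorem~\ref{thm:uniqueness} to conclude that \Po admits a unique max-min angle Delaunay triangulation, which is exactly the claim.

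There is essentially no hard step here: the entire content is the observation that a symmetric quadrilateral equates the lengths of two \emph{disjoint} diagonals, and disjoint diagonals are distances between distinct point-pairs. The only point deserving a moment's care---and the closest thing to an obstacle---is confirming that the diagonals $pr$ and $qs$ are genuinely determined by disjoint (hence distinct) pairs of vertices, so that the distinct-distances assumption indeed applies to them; this is immediate from the labelling of the quadrilateral, which is why the statement is, as the authors note, an immediate corollary.
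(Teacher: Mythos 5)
Your proof is correct and matches the paper's intent exactly: the authors present this as an immediate corollary of Theorem~\ref{thm:uniqueness}, relying precisely on the observation that a symmetric quadruple forces two equal diagonal lengths, which the distinct-pairwise-distances hypothesis forbids. Nothing further is needed.
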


The usual and generic way to address degeneracies is the use of symbolic perturbations%
~\cite{%
devillers:hal-01586511,
ec-gard-95,
s-nmpgc-98,
y-stgd-90
} 
that perturb the input point set by an infinitesimal quantity in a way that is guaranteed
to remove degeneracies.
This approach has been used for Delaunay triangulation and allows
to draw diagonals in a set of concyclic points in a consistent but not really meaningful manner%
~\cite{ads-rdppw-00,
dt-pdwdt-11}.  
Our result allows to use the max-min Delaunay inside such a set of concyclic points
yielding a meaningful result to triangulate configurations that are
usually considered as degenerate.
 To this aim 
we need an efficient algorithm for this special case.
Such approach has the advantage on symbolic perturbation to define the triangulation
independently of a numbering of the points or of the point coordinates in a particular reference frame.
Notice that in symmetric configurations some cases remain degenerate and symbolic
perturbation cannot help in such a case (unless the perturbation keeps the points concyclic
which seems unpractical).

\begin{thm}\label{thm:complexity}
Fix $n$ points ordered along a Euclidean circle without any symmetric quadruples. 
The unique max-min angle Delaunay triangulation on the $n$ points can be computed using 
$O(n)$ arithmetic operations.
\end{thm}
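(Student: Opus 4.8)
The plan is to reduce the geometric max-min criterion to a purely arithmetic comparison, turn it into a global characterization, and then exhibit a circular sweep that maintains a locally optimal triangulation with only $O(n)$ flips.

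First I would make the local decision explicit. Since all points lie on a common circle, the inscribed angle theorem lets me write every angle of every candidate triangle as half of the arc it subtends. Concretely, for four points $p,q,r,s$ in circular order bounding sub-arcs $\alpha,\beta,\gamma,\delta$ with $\alpha+\beta+\gamma+\delta=2\pi$, I would compute the six angles of the triangulation using diagonal $pr$ and the six using diagonal $qs$. A short computation shows that four of the six angles coincide between the two triangulations, so the lexicographic comparison of the sorted angle lists collapses to comparing the two remaining angles in each case, namely $\{(\alpha+\beta)/2,(\gamma+\delta)/2\}$ against $\{(\beta+\gamma)/2,(\delta+\alpha)/2\}$. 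Since the two entries of each pair are supplementary, the comparison is decided by which diagonal subtends an arc closer to a half-circle, equivalently by which diagonal is the longer chord. The no-symmetric-quadruple hypothesis is exactly what forbids the ties $\alpha=\gamma$ or $\beta=\delta$, so this preference is always strict. The upshot is a criterion tested in $O(1)$ arithmetic operations: in any quadrilateral keep the longer diagonal.

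Next I would turn this local rule into a global characterization. By Theorem~\ref{thm:uniqueness} the max-min triangulation is unique; it is visibly locally optimal, since no single flip improves it. Conversely, any triangulation in which every interior edge is longer than its flip alternative must be the max-min one: flipping to the longer diagonal strictly raises the sorted angle vector in lexicographic order, and the flip graph of triangulations of a convex polygon is connected, so a triangulation admitting no improving flip is the global optimum. A convenient way to produce such a triangulation, and to double-check correctness, is the greedy rule ``insert chords in order of decreasing length, discarding a chord that crosses one already kept'': because in the final triangulation the only edge crossing a given diagonal is its own flip, one sees that each kept diagonal is longer than its flip, so the greedy output is everywhere locally optimal, hence equal to the max-min triangulation.

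Finally, and this is where the real work lies, I would give the linear-time computation. The greedy order above would cost $\Theta(n^2)$ if implemented directly, so instead I would process the points in their circular order while maintaining the current triangulation on a stack, connecting each new point to the visible chain and repeatedly flipping the last diagonal while it is shorter than its alternative, each test being the $O(1)$ comparison from the first step. Correctness of the final triangulation then follows from the local-to-global statement, once every interior edge has been made locally optimal. The main obstacle is the complexity bound: I must show that the total number of flips is $O(n)$ rather than $O(n^2)$. I expect to prove this by an amortized/monotonicity argument specific to the circle, establishing that the chord-length comparisons behave monotonically along the sweep so that each vertex is pushed onto and popped from the stack a bounded number of times, exactly as in a Graham scan. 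Pinning down this monotonicity, and in particular handling the wrap-around edges that close the polygon, is the delicate point; everything else is bookkeeping that runs in $O(n)$ arithmetic operations.
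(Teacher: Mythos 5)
Your first step is sound and agrees with Lemma~\ref{lem:diags}: for concyclic points the max--min comparison reduces to lexicographic comparison of the sorted diagonal lengths, so the local flip rule is ``keep the longer diagonal.'' The fatal gap is your local-to-global step. Connectivity of the flip graph only guarantees that improving flips terminate at \emph{some} local optimum; it does not imply that a triangulation with no improving flip is the global optimum. The classical ``locally Delaunay implies globally Delaunay'' theorem rests on the paraboloid lifting (or on a witness point strictly inside a circumcircle), and this machinery degenerates completely when all points are concyclic: the lifted points are coplanar, the longer-diagonal rule is only a tie-break, and it admits strict local optima that are not global. Concretely, take five points on the unit circle at polar angles $0^\circ,100^\circ,195^\circ,220^\circ,260^\circ$; one checks that no quadruple is symmetric. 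The fan from $p_1$ has diagonals $p_1p_3$ and $p_1p_4$, of lengths $2\sin 60^\circ\approx 1.732$ and $2\sin 80^\circ\approx 1.970$; their flips are $p_2p_4$ and $p_0p_3$, of lengths $\approx 1.075$ and $\approx 1.879$, so neither flip improves. Yet the fan from $p_0$ has sorted diagonal lengths $(1.879,\,1.983)$, which lexicographically beats $(1.732,\,1.970)$; so this locally optimal triangulation is not the max--min one.

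This counterexample also undercuts your greedy ``longest chord first'' argument and the correctness of the stack-based sweep, since both are certified only through the same (false) claim that local optimality suffices. The underlying issue is that the objective maximizes the \emph{smallest} diagonal first --- a bottleneck criterion --- which is why the paper proceeds by structural facts instead: the smallest diagonal of any triangulation is an ear (Lemma~\ref{lem:d0}), \DT contains a maximal pair of disjoint ears (Lemma~\ref{lem:earDT}), and the algorithm peels off one well-chosen ear per step while maintaining the three or four longest ears in amortized constant time. The $O(n)$ bound on the number of flips, which you defer to a Graham-scan-style amortization, is a second unproved step, but it is moot given that the limit of the flip process need not be the right triangulation.
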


This note is structured as follows: We give some structural lemmas about concyclic points in Section~\ref{sec:prop}. We prove Theorem~\ref{thm:uniqueness} in Section~\ref{sec:uniqueness}. We then describe our algorithm in a simple setting and prove a weaker version of Theorem~\ref{thm:complexity} in Section~\ref{sec:algo}. We want to present the general idea which is nice and very natural without hiding it in technical details. The details comes in Section~\ref{sec:tech}. Finally, we give an extended algorithm in Section~\ref{sec:extalgo} that can be applied to the most general case.

\section{Properties of concyclic points}\label{sec:prop}

For an integer $n\geq 4$, let $C$ be the unit circle in the
Euclidean plane and \Po$=(p_0,\cdots,p_{n-1})$ be a set of points of $C$
listed in (cyclic) counterclockwise order around $C$. We call {\it
  chords} the segments of the form $[p_ip_{i+1}]$ and {\it diagonals}
the other possible segments. We also call {\it ears} the diagonals of
the form $[p_ip_{i+2}]$. Note that if $n=4$, all segments are ears
or chords, but otherwise there are  diagonals that are not
  ears.  

We are looking for a triangulation $\tau$ of a polygon $P=p_0\cdots p_{n-1}$ which we think of as a decomposition of the polygon into triangles, or alternatively as a collection of edges that cut the polygon into triangles. By an Euler characteristic argument, $\tau$ contains $n-3$ diagonals and $\tau$ cuts $P$ into $n-2$ triangles. The resulting triangles have a total of $3n-6$ angles, all strictly between $0$ and $\pi$. We denote by $A(\tau)=(a_0,\cdots,a_{3n-7})$ the set of these angles listed in increasing order: $a_0\le a_1\le\cdots\le a_{3n-7}$. For different triangulations $\tau$, we order the resulting $A(\tau)$ according to lexicographical order and providing the resulting triangulations with an order. A triangulation $\tau_0$ is said to be {\it angle maximal} if $A(\tau_0) \geq A(\tau)$ for all triangulations $\tau$ of $P$. 

We choose $\tau$ to be one of the triangulations having the list of angles $A(\tau)$  which is maximal for the lexicographical order. Let
 $D(\tau)=(d_0,\cdots,d_{n-3})$ be the list of the diagonals of $\tau$
 such that $\ell(d_0)\le \ell(d_1)\le\cdots\le \ell(d_{n-3})$. For
 simplicity, in the sequel we will not distinguish between a diagonal
 and its length (thus $d_i$ means either the diagonal $d_i$ or its
 length $\ell(d_i)$). As before, for different triangulations $\tau$,
 we order the resulting $D(\tau)$ according to lexicographical
 order. A triangulation $\tau_0$ is said to be {\it length maximal} if
 $D(\tau_0) \geq D(\tau)$ for all triangulations $\tau$ of $P$.  

\begin{lem}\label{lem:diags}
A triangulation of a set of concyclic points $P$ is angle maximal if and only if it is length maximal.
\end{lem}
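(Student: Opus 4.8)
The plan is to translate both characterizations into the language of circular arcs and then reduce the global statement to a single local comparison across an edge flip. By the inscribed angle theorem, every angle of a triangle $p_ip_jp_k$ inscribed in $C$ equals half the length of the arc cut off by its opposite edge on the far side; thus each of the $3n-6$ angles of $A(\tau)$ is exactly half of an arc determined by an edge of $\tau$, while each chord has length $2\sin$ of half its arc (either arc gives the same value). Since the triangulations of a convex polygon are connected under diagonal flips, it suffices to understand how one flip changes $A(\tau)$ and $D(\tau)$ simultaneously, and then to pass from this local information to the two global maximality statements.

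For the local computation I will fix the quadrilateral affected by a flip and label its four boundary arcs $a,b,c,e$ in cyclic order, so that $a+b+c+e=2\pi$. A direct application of the inscribed angle theorem shows that the two triangulations of this quadrilateral have angle multisets sharing the four values $\{a/2,b/2,c/2,e/2\}$ and differing only in one supplementary pair: one diagonal produces $\{(a+b)/2,\,(c+e)/2\}$ and the other produces $\{(b+c)/2,\,(a+e)/2\}$, each pair summing to $\pi$. The two diagonals have lengths $2\sin\frac{a+b}{2}$ and $2\sin\frac{b+c}{2}$. Writing the first pair as $\{\gamma,\pi-\gamma\}$ with $\gamma\le\pi/2$ and the second as $\{\delta,\pi-\delta\}$ with $\delta\le\pi/2$, the associated diagonal length is $2\sin\gamma$ (resp.\ $2\sin\delta$), which is strictly increasing on $[0,\pi/2]$. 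Hence one diagonal is strictly longer precisely when its associated parameter is strictly larger.

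It then remains to see that enlarging this parameter enlarges the sorted angle list. I will use a prefix-counting observation: replacing the pair $\{\delta,\pi-\delta\}$ by $\{\gamma,\pi-\gamma\}$ with $\gamma>\delta$ leaves the number of angles below any threshold unchanged except on the interval $[\delta,\gamma)$, where it strictly decreases by one; equivalently the smallest of the changed angles strictly grows while only a large angle ($\ge\pi/2$) shrinks. Since the lexicographic comparison of the increasing lists $A(\tau)$ reads from the smallest entry upward, this forces $A$ to increase strictly, and the identical argument (replacing one diagonal length by a longer one) forces $D$ to increase strictly. Consequently a single flip toward the strictly longer diagonal strictly increases both $A(\tau)$ and $D(\tau)$, whereas a flip inside a symmetric quadrilateral (where $\gamma=\delta$) leaves both lists unchanged. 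The easy half of each implication now follows: an angle maximal triangulation admits no flip raising $A$, so by the above no diagonal is strictly shorter than its flip alternative, and length maximality forces exactly the same local condition; I will call a triangulation satisfying it \emph{locally optimal}.

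The main obstacle is the converse, local-to-global step: a locally optimal triangulation is in fact globally angle maximal and globally length maximal. Since every improving flip strictly increases $D(\tau)$ and there are finitely many triangulations, sequences of improving flips terminate, so locally optimal triangulations exist and are reachable; the real content is that local optimality of every diagonal already forces optimality of the whole length (equivalently angle) list. I expect to establish this by a Lawson-type exchange argument: given a locally optimal $\sigma$ and a length maximal $\tau$, I compare them along a diagonal of $\tau$ crossing diagonals of $\sigma$ and use the local length inequality at each crossed edge to propagate the comparison, concluding $D(\sigma)=D(\tau)$ and hence $A(\sigma)=A(\tau)$. Combined with the co-monotonicity of a single flip established above, this closes the equivalence between angle maximal and length maximal triangulations.
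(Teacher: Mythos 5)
Your opening observation --- that by the inscribed angle theorem each angle of a triangle inscribed in $C$ is determined by the length of its opposite edge and by which side of that edge contains the centre --- is exactly the right tool, but you only exploit it locally, across one flip, and the flip route then rests on a local-to-global claim that is \emph{false}. Your step ``a locally optimal triangulation is in fact globally angle maximal and globally length maximal'' (locally optimal meaning no diagonal is strictly shorter than its flip alternative) already fails for $n=5$. Put points on the unit circle at angles $0$, $0.2\pi$, $\pi$, $1.15\pi$, $1.3\pi$. Then $|p_0p_2|=2$, $|p_1p_3|=2\sin(0.475\pi)\approx1.9938$, $|p_1p_4|=2\sin(0.45\pi)\approx1.9754$, $|p_0p_3|=2\sin(0.425\pi)\approx1.9447$, $|p_2p_4|=2\sin(0.15\pi)\approx0.9080$; all five diagonal lengths are distinct, so there is no symmetric quadruple. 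The fan from $p_0$ uses $\{p_0p_2,p_0p_3\}$; its two flips produce $p_1p_3$ (shorter than $p_0p_2$) and $p_2p_4$ (shorter than $p_0p_3$), so it is locally optimal, with $D=(1.9447,\,2)$. Yet the fan from $p_1$ has $D=(1.9754,\,1.9938)$, which is lexicographically larger. So local optimality implies neither length nor angle maximality, and no Lawson-type exchange argument can rescue the implication; flip-graph connectivity is beside the point, since what you would need is that improving flips always reach the global optimum, and that is precisely what fails here.

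The repair is to drop flips and apply your first paragraph globally, which is what the paper does. Every angle is opposite exactly one edge of the triangulation; a diagonal $d$ bounds a triangle on each side and hence contributes exactly the pair $\{\arcsin(d/2),\,\pi-\arcsin(d/2)\}$ to $A(\tau)$, while each chord contributes a single angle that is the same in every triangulation. Thus $A(\tau)$ is obtained from $D(\tau)$ by a fixed transformation: adjoin the fixed multiset of chord angles, and send the sorted diagonal lengths to their $\arcsin$ values (all at most $\pi/2$, hence forming the lower half of the diagonal-angle list) together with their supplements (the upper half). Since $\arcsin(d/2)$ is increasing in $d$ and the first discrepancy between two such angle lists always occurs in the lower half, this transformation is strictly order-preserving for the lexicographic orders, and the equivalence of angle maximality and length maximality is immediate --- no local search, and no local-to-global step, is needed.
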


\begin{proof}
Let $\tau$ be a triangulation of $P$. Each chord $[p_ip_{i+1}]$ is incident to
a triangle of the form $p_ip_{i+1}p_j$. 
All the vertices $p_j$ for $j\neq i, i+1$ lie on the same circular arc of $C$ between $p_{i+1}$ and $p_i$. 

\smallskip
\noindent\begin{minipage}{0.6\textwidth} 
Given a diagonal or a chord $d$ the inscribed angle theorem shows
that the 
angle in the triangle incident to $d$ (on a given side of $d$) at the vertex opposite to $d$
depends only on $d$ and not on the 
position of this vertex on $C$.
Namely, on the side of $d$ that contains the origin this angle is
$\arcsin\frac{d}{2}$ 
and $\pi-\arcsin\frac{d}{2}$ when the origin is on the other side.
Since $\arcsin\frac{d}{2}$ is an 
increasing function of $d$ when $d\in[0,2]$  the angles  
in the triangulation are: 
\end{minipage}\hfill\begin{minipage}{0.3\textwidth}
\includegraphics[width=\textwidth,page=2]{Figures}
\end{minipage}\\
\begin{eqnarray*}
\arcsin\tfrac{d_0}{2}\leq\arcsin\tfrac{d_1}{2}\leq&\ldots&\leq \arcsin\tfrac{d_{n-3}}{2}\\
\leq\pi- \arcsin\tfrac{d_{n-3}}{2}\leq &\ldots&\leq\pi- \arcsin\tfrac{d_{1}}{2}\leq\pi- \arcsin\tfrac{d_{0}}{2}
\end{eqnarray*}
and for a chord $p_ip_{i+1}$
$\arcsin\frac{\|p_ip_{i+1}\|}{2}$ or 
$\pi-\arcsin\frac{\|p_ip_{i+1}\|}{2}$
depending on the side of the origin with respect to $p_ip_{i+1}$. 
Since the angles associated to chords do not depend on a particular triangulation,
they are not relevant when comparing the angles of two triangulations and the above correspondence
between angles associated to diagonals and lengths of these diagonals allows to conclude.
\end{proof}

\noindent\begin{minipage}{0.7\textwidth} 
We will now show that ears are extremal for the lexicographical order of \DT. 
\begin{lem}\label{lem:d0}
For any triangulation of a set of concyclic points, the smallest diagonal, $d_0$, is always an ear.  
\end{lem}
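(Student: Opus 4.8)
The plan is to argue by contradiction. Suppose the smallest diagonal $d_0=[p_ap_b]$ is \emph{not} an ear. Then neither of the two arcs of $C$ cut off by $d_0$ contains exactly one vertex, so each of them contains at least two of the $p_i$; equivalently, each of the two sub-polygons bounded by $d_0$ has at least four vertices. My goal is to exhibit a diagonal of the \emph{same} triangulation that is strictly shorter than $d_0$, contradicting minimality. Note that this argument is purely about the smallest diagonal and does not use angle- or length-maximality, which matches the statement ``for any triangulation''.

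The key geometric point is that chord length is \emph{not} monotone in the number of skipped vertices: an arc longer than a semicircle can correspond to a short chord, so one must choose carefully on which side of $d_0$ to work. Since the central angles of the two arcs cut off by $d_0$ sum to $2\pi$, at least one of them is at most $\pi$; I work on that side. Let $p_ap_cp_b$ (with $a<c<b$ along this arc) be the triangle of the triangulation incident to $d_0$ on that side. Because $p_c$ lies on an arc of central angle at most $\pi$, the inscribed angle at $p_c$ subtending $[p_ap_b]$ is at least $\pi/2$; this is exactly the value $\pi-\arcsin\frac{d_0}{2}$ computed in Lemma~\ref{lem:diags}. Hence in the triangle $p_ap_cp_b$ the angle at $p_c$ is the largest one, so the opposite side $[p_ap_b]=d_0$ is strictly the longest side; in particular both $[p_ap_c]$ and $[p_cp_b]$ are strictly shorter than $d_0$.

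It then remains to verify that at least one of $[p_ap_c]$, $[p_cp_b]$ is a genuine diagonal rather than a chord, and this is where the ``not an ear'' hypothesis enters. The chosen sub-polygon has at least four vertices, so $b-a\geq 3$, and therefore $p_c$ cannot equal both $p_{a+1}$ and $p_{b-1}$ simultaneously. Thus at least one of the two edges skips a vertex and is a diagonal of the triangulation, and by the previous paragraph it is strictly shorter than $d_0$ — contradicting that $d_0$ is the smallest diagonal. The only delicate step, and the one I expect to be the main obstacle, is the non-monotonicity noted above; once the correct side (the one whose arc is at most a semicircle) is selected, the inscribed-angle computation from Lemma~\ref{lem:diags} turns $d_0$ into the longest side of its triangle and the rest is bookkeeping.
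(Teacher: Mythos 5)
Your proof is correct and takes essentially the same route as the paper: both hinge on the observation that, in the triangle incident to $d_0$ on the side of the minor arc (equivalently, the side not containing the circle's centre), the angle opposite $d_0$ is at least $\pi/2$, so the two remaining edges are strictly shorter than $d_0$ and must therefore be chords. The paper phrases this directly while you argue by contradiction, but the underlying geometric content is identical.
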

\begin{proof} 

 Let $d$ be a diagonal of the triangulation, then the two other edges of the triangle
incident to $d$ on the side of $d$ that does not contain the origin are shorter
than $d$. Thus if $d$ is the shortest diagonal, these two edges are chords
and $d$ is an ear.
\end{proof}
\end{minipage}\hfill\begin{minipage}{0.25\textwidth}
\includegraphics[width=\textwidth,page=3]{Figures}
\end{minipage}

We denote by $e_i$ the ear $[p_{i-1}p_{i+1}]$.
\smallskip

\noindent\begin{minipage}{0.7\textwidth} 
\begin{lem}\label{lem:triear}
Let $e_i$ and $e_j$ be two non-crossing ears. There exists a triangulation 
containing both $e_i$ and $e_j$.
\end{lem}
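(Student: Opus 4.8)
The plan is to exploit that the points lie in convex position, so that $P=p_0\cdots p_{n-1}$ is a convex polygon and every ear is a genuine interior diagonal. I would build the desired triangulation by two successive splits. First I would cut $P$ along the ear $e_i=[p_{i-1}p_{i+1}]$; since $P$ is convex this diagonal lies in its interior and divides $P$ into two convex sub-polygons, namely the triangle $p_{i-1}p_ip_{i+1}$ on one side and the polygon $P'$ obtained by deleting the vertex $p_i$ on the other. Then I would locate $e_j$ within this decomposition, split once more along $e_j$, and triangulate each of the resulting convex pieces arbitrarily (for instance by a fan from one of its vertices), finally reassembling them into a triangulation of $P$ that contains both $e_i$ and $e_j$.

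The key point is that the non-crossing hypothesis forces $e_j$ to lie entirely on one side of $e_i$. Two diagonals of a convex polygon cross in the interior exactly when their endpoints strictly interleave along $C$; so if $e_i$ and $e_j$ do not cross, the open segment $e_j$ meets the line through $e_i$ at most at a shared endpoint, and $e_j$ is therefore contained in the closure of exactly one of the two sub-polygons. Since $e_j$ is an ear, it is not a chord of $P$, and since $i\neq j$ it is not equal to $e_i$; as the edges of the two sub-polygons consist only of chords of $P$ together with $e_i$, the segment $e_j$ is a genuine diagonal of the sub-polygon containing it. Splitting that sub-polygon along $e_j$ and triangulating the two convex pieces it creates produces a triangulation that contains $e_j$; combined with any triangulation of the remaining sub-polygon and with the triangle $p_{i-1}p_ip_{i+1}$, this yields a triangulation of $P$ whose diagonal set contains $\{e_i,e_j\}$, as required.

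The only real care is in the bookkeeping of the boundary configurations, and this is where I expect the main (minor) obstacle to be. I would first observe that two distinct non-crossing ears cannot be cyclically adjacent: $e_i$ and $e_{i+1}$ join $p_{i-1}$ to $p_{i+1}$ and $p_i$ to $p_{i+2}$, whose endpoints interleave and hence cross. Thus $i$ and $j$ differ by at least $2$ cyclically, which guarantees that $e_i$ and $e_j$ are distinct genuine diagonals and that deleting $p_i$ leaves $p_j$ flanked by $p_{j-1}$ and $p_{j+1}$, so $e_j$ indeed survives as a diagonal of $P'$. I would also treat the case where $i$ and $j$ differ by exactly $2$, in which the two ears share one endpoint but still do not properly cross, and check the small case $n=4$ directly, where there are only two ears and the statement is immediate. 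The argument relies solely on the elementary fact that every convex polygon admits a triangulation, and is in effect the two-diagonal instance of the standard statement that any family of pairwise non-crossing diagonals of a convex polygon extends to a full triangulation.
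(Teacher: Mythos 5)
Your argument is correct and matches the paper's approach: both proofs use convex position to cut the polygon along the two non-crossing ears and then triangulate the remaining convex pieces. The paper simply makes this explicit by exhibiting the triangulation as two fans, one from $p_{i-1}$ and one from $p_{j-1}$, whereas you invoke the general fact that non-crossing diagonals of a convex polygon extend to a triangulation.
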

\begin{proof}

The triangulation with diagonals
\[e_i,[p_{i-1}p_{i+2}],[p_{i-1}p_{i+3}],\ldots,[p_{i-1}p_{j-2}],[p_{i-1}p_{j-1}]\]
\[e_j,[p_{j-1}p_{j+2}],[p_{j-1}p_{j+3}],\ldots,[p_{j-1}p_{i-2}],[p_{j-1}p_{i-1}]\]
is such a triangulation.
\end{proof}
\end{minipage}\hfill\begin{minipage}{0.25\textwidth}
\includegraphics[width=\textwidth,page=4]{Figures}
\end{minipage}

\smallskip

We can deduce strong structural properties about \DT from the previous lemma.

\begin{prop}\label{prop:path}   
If $P$ is a set of concyclic points, then the dual of \DT is a path. 
\end{prop}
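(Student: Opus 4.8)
The plan is to show that the dual graph of the max-min angle Delaunay triangulation \DT is a path (each triangle has at most two neighbors, and there is no branching), by analyzing what the extremal structure forces on the "ears" of the configuration. The dual of any triangulation of a polygon is a tree with $n-2$ nodes (one per triangle); a node has degree equal to the number of its edges that are diagonals rather than polygon chords. Thus a triangle is a leaf of the dual tree exactly when two of its three sides are chords, equivalently when it is an ear triangle $p_{i-1}p_ip_{i+1}$ cut off by an ear $e_i$. So the proposition is equivalent to the statement that \DT contains exactly two ears, and that the dual has no vertex of degree three (no "central" triangle all of whose sides are diagonals). A tree in which every internal node has degree at most two is precisely a path, so it suffices to rule out branching.

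First I would recall from Lemma~\ref{lem:d0} that the shortest diagonal $d_0$ of \DT is an ear, say $e_i$; this gives at least one leaf of the dual tree, i.e. at least one ear in \DT. The heart of the argument is to show that \DT cannot contain a triangle all three of whose sides are diagonals, since such a triangle would be a degree-three node and produce a branch point. I would argue by an exchange/optimality argument in the spirit of Lemma~\ref{lem:diags}: if the dual tree branched, there would be a triangle $T=p_ap_bp_c$ with all three sides diagonals, and then each of the three subpolygons hanging off the three sides of $T$ would itself need a shortest diagonal which, by Lemma~\ref{lem:d0} applied to the subpolygon, is an ear of $P$. The goal is to exhibit a competing triangulation whose sorted diagonal-length list $D(\tau)$ is lexicographically larger, contradicting length-maximality (equivalently angle-maximality via Lemma~\ref{lem:diags}). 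Concretely, I would try to replace the configuration around $T$ by flipping so as to remove the longest short diagonal bounding $T$ in favor of a longer one, showing the central triangle is never optimal.

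An alternative and cleaner route, which I expect to be the one the paper takes, uses Lemma~\ref{lem:triear} more directly. The idea is to induct on $n$: by Lemma~\ref{lem:d0} the smallest diagonal of \DT is an ear $e_i$, so the triangle $p_{i-1}p_ip_{i+1}$ is a leaf of the dual. Removing the vertex $p_i$ yields a polygon $P'$ on $n-1$ concyclic points, and one would like to argue that the induced triangulation of $P'$ is again max-min angle optimal, so that by induction its dual is a path; re-attaching the leaf triangle at $e_i$ then keeps the dual a path provided $e_i$ attaches to an endpoint of that path. The delicate point is the inductive hypothesis: cutting off an optimal ear should leave an optimal triangulation on $P'$, since the angles and diagonal lengths of the remaining triangulation are inherited unchanged and any improvement on $P'$ would lift to an improvement on $P$ (the ear $e_i$ and its two chords are shared). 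Lemma~\ref{lem:triear} guarantees the compatibility needed to realize the re-attachment without forcing a branch.

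The hard part will be the inductive step, specifically justifying that the restriction of an angle-maximal triangulation of $P$ to the subpolygon $P'$ obtained by deleting the apex of the smallest ear is itself angle-maximal on $P'$, and controlling where the new leaf attaches so that no degree-three vertex is ever created. This requires care because deleting a vertex changes the set of admissible diagonals, so one must check that no triangulation of $P'$ beats the restricted one — this follows because any triangulation of $P'$ extends canonically to a triangulation of $P$ by adding back $e_i$ (since $e_i$ is a valid ear once $p_i$ is reinstated), and the added diagonal has the same length in both, so a lexicographic improvement downstairs would contradict maximality upstairs. Making this exchange argument precise, and verifying that the shortest diagonal's ear is compatible with the structure inductively built, is where the real work lies; the rest is bookkeeping on the dual tree.
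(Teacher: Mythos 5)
You correctly reduce the proposition to a statement about ears: the dual of any triangulation of $P$ is a tree whose leaves are exactly the ear triangles, so the dual is a path if and only if \DT has exactly two ears (equivalently, no degree-three node). But you never actually prove that statement. Your first route (rule out a triangle with three diagonal sides by ``flipping so as to remove the longest short diagonal'') is only a hope --- no competing triangulation is exhibited and no lexicographic comparison is carried out. Your second route, induction by cutting off the smallest ear $e_i$, founders exactly where you flag it: after deleting $p_i$, the triangle of $DT(P')$ incident to the chord $[p_{i-1}p_{i+1}]$ may be an \emph{interior} node of the inductive path, and re-attaching the ear triangle there creates a degree-three node. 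Showing this cannot happen is equivalent to the original claim that \DT has at most two ears, so the induction does not make progress; the lifting argument you give (a better triangulation of $P'$ plus $e_i$ beats \DT) only establishes that the restriction is optimal on $P'$, not where the leaf attaches.

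The paper's proof is a direct global comparison that you are missing. Suppose a triangulation $T_0$ contains three ears $e_i, e_j, e_k$ with $|e_i|\geq|e_j|\geq|e_k|$. Lemma~\ref{lem:triear} produces a triangulation $T$ containing the two non-crossing ears $e_i$ and $e_j$ whose remaining diagonals (two fans from $p_{i-1}$ and $p_{j-1}$) are all strictly longer than $|e_j|$, so $D(T)$ begins $(e_j,\dots)$ or $(e_j,e_i,\dots)$ with everything else larger, whereas $D(T_0)$ begins $(e_k,e_j,\dots)$ or worse. Hence $D(T_0)<D(T)$ lexicographically, and by Lemma~\ref{lem:diags} $T_0$ is not angle maximal. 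Thus \DT has at most two ears, its dual is a tree with exactly two leaves, and such a tree is a path. That single exchange against the Lemma~\ref{lem:triear} triangulation is the content of the proof, and it is absent from your proposal.
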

\begin{proof}
We first prove that \DT has at most two ears. Let us consider a triangulation $T_0$ that contains three ears $\{e_i,e_j,e_k\}$ such that $|e_i|\geq |e_j|\geq |e_k|$. Let $T$ be the triangulation given by Lemma~\ref{lem:triear} for the ears $e_i$ and $e_j$. In $T$ all the diagonals different from $e_i$ and $e_j$ have length strictly bigger than $e_j$. Then the list of all the diagonals sorted by length of $T$ has the form $(e_j,d,\cdots)$ or $(e_j,e_i,d,\cdots)$ with $|d|>|e_j|$. However, for $T_0$ we have $(e_k,e_j,\cdots)$ or $(e_k,e_j,e_i,\cdots)$ that is strictly smaller than the list of $T_0$ for the lexicographic order. It means that $T_0$ cannot be \DT and thus that \DT has at most two ears. So, the dual of \DT is a tree with exactly two leaves and and as such is a path.
\end{proof}

Let $E$ be the set of all pairs of disjoint ears of $P$. An element of $E$ is said to be maximal if its shortest ear has maximal length among all elements of $E$. 
Lemma~\ref{lem:triear} implies that, for any maximal element $\{e_i,e_j\}$ of $E$, there exists a triangulation which has as set of ears exactly $e_i$ and $e_j$. This enables us to show the following.

\begin{lem}\label{lem:earDT}
\DT has a maximal element of $E$ as a subset.
\end{lem}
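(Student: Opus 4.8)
The plan is to identify the two ears of \DT with a maximal element of $E$, using the length-maximal characterization of Lemma~\ref{lem:diags}. By Proposition~\ref{prop:path} the dual of \DT is a path, so \DT has exactly two ears; call them $e_i$ and $e_j$ with $|e_i|\le |e_j|$. Since both are diagonals of a single triangulation they do not cross, so $\{e_i,e_j\}\in E$. First I would pin down the shortest diagonal of \DT: by Lemma~\ref{lem:d0} the smallest diagonal $d_0$ of \DT is an ear, and the only ears of \DT are $e_i$ and $e_j$, hence $d_0=e_i$ is the shorter of the two. Thus the shortest ear of the pair $\{e_i,e_j\}$ is exactly the value $d_0(\DT)$.

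Next I would use Lemma~\ref{lem:triear} to control the largest possible value of a shortest ear. Let $m$ be the maximal length of a shortest ear over all pairs in $E$, attained by a pair $\{e_a,e_b\}$ with $|e_a|=m\le |e_b|$. By Lemma~\ref{lem:triear} there is a triangulation $T$ containing both $e_a$ and $e_b$, and --- exactly as argued in the proof of Proposition~\ref{prop:path} --- every diagonal of the resulting double fan other than $e_a,e_b$ is strictly longer than $|e_b|\ge m$. Consequently the sorted list of diagonals of $T$ begins $(e_a,e_b,\dots)$, so its smallest diagonal is $d_0(T)=|e_a|=m$. Since \DT is length maximal (Lemma~\ref{lem:diags}), comparing the first entries of the diagonal lists of \DT and $T$ in the lexicographic order yields $d_0(\DT)\ge d_0(T)=m$.

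Finally I would close the loop. On one hand $|e_i|=d_0(\DT)\ge m$ by the previous step; on the other hand $\{e_i,e_j\}\in E$, so its shortest ear $|e_i|$ is at most the maximal value $m$ by definition of $m$. Hence $|e_i|=m$, which is precisely the statement that $\{e_i,e_j\}$ is a maximal element of $E$, and it is a subset of \DT. The point I expect to need the most care --- and which I view as the crux --- is the claim that the auxiliary triangulation $T$ realizing the optimal pair has smallest diagonal exactly $m$: this requires that the double-fan construction of Lemma~\ref{lem:triear} introduces no ear shorter than $e_a$, which is guaranteed by the strict length inequality already established inside the proof of Proposition~\ref{prop:path}. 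A minor separate check is the degenerate case $n=4$, where the two ears of \DT coincide as a segment and $E$ behaves specially; I would handle it directly.
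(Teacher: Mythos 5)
Your proposal is correct and follows essentially the same route as the paper: both arguments take a maximal pair of $E$, use Lemma~\ref{lem:triear} to build a triangulation whose smallest diagonal realizes that maximal value, and then sandwich the length of \DT's smallest ear between the two inequalities coming from length maximality (Lemma~\ref{lem:diags} with Lemma~\ref{lem:d0}) and from maximality of the pair in $E$. The only cosmetic difference is that you invoke Proposition~\ref{prop:path} to make explicit that \DT has exactly two non-crossing ears, a point the paper leaves implicit.
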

\begin{proof}
Assume that $P$ does not contain four points in symmetric position. 
Let $(e_i,e_j)$ be a maximal couple of $E$ with $e_i$ smaller than $e_j$. 

Lemma~\ref{lem:triear} yields a triangulation whose smallest edge has length  $|e_i|$.
Lemma~\ref{lem:d0} implies that \DT has an ear $e_k$ as smallest
diagonal. Let $e_l$ be another ear of \DT.
On the one hand, comparing the two triangulations,
  Lemma~\ref{lem:diags} gives that $|e_k|\geq |e_i|$
since \DT is length maximal.
On the other hand, comparing the two pair of ears $\{e_i,e_j\}$ and $\{e_k,e_l\}$,
maximality of  $\{e_i,e_j\}$  in $E$ yields $|e_i|\geq |e_k|$.
Thus  $|e_k|= |e_i|$
and the two ears of \DT form also a maximal pair of $E$.
\end{proof}

This lemma is the key of the construction of \DT.
Amongst all ears, the longest one is not a good candidate because it can enforce shorter ear afterwards
while the second (non intersecting) longest ear is always part of a good triangulation.

\section{Uniqueness}\label{sec:uniqueness}

We prove uniqueness in this section.  

\begin{proof}[Proof of Theorem~\ref{thm:uniqueness}] 
\hfill Let \Po be a minimal set of concyclic points admitting two 
\smallskip\\\begin{minipage}{0.75\textwidth} 
distinct   Delaunay 
 triangulations without any symmetric quadruple of points. Let us
assume that there exist 
two disjoint Delaunay
triangulations of \Po, $\Tr_1$ and $\Tr_2$.
By minimality of \Po they cannot share a
diagonal.  By  Lemma~\ref{lem:diags}, 
 there is a length preserving bijective map $b$ between the diagonals
 of $\Tr_1$ and $\Tr_2$. Let $d$ be a diagonal of $\Tr_1$. Note that
 $d$ and $b(d)$ must share a single point, otherwise their endpoints
 form a symmetric quadruple. In addition, if $d'$ is another diagonal
 of $\Tr_1$, then $d$, $b(d)$, $d'$ and $b(d')$ cannot all share the
 same point, again because this would create a symmetric
 quadruple. This implies that each pair is associated to a different
 point. However, the points inside ears of $\Tr_1$ or $\Tr_2$ cannot
 be one of these shared points. And since $\Tr_1$ and $\Tr_2$ have at
 least 2 ears each, those 
\end{minipage}\hfill\begin{minipage}{0.2\textwidth}
\includegraphics[width=\textwidth,page=5]{Figures}
\end{minipage}
 \smallskip\\
edges have to be different since the
 triangulations do not share any diagonals. This implies that at most
 $n-4$ points are the end points of these pairs. This
 contradicts the  
 fact that there are $n-3$ pairs. Hence it is impossible to have two
 disjoint Delaunay triangulations of \Po and this completes the proof.  
\end{proof}

\section{A simplified algorithm}\label{sec:algo}

Lemma~\ref{lem:earDT} suggests an algorithm. We need to find an ear that belongs to all the maximal pairs of $E$. We first describe a simplified version of the algorithm that works in the case where \Po does not admits two diagonals of the same length. This is a stronger condition not having any quadruple of points in symmetric position because it also forbids two diagonals of the same length in the event they share a vertex. The algorithm works as follows.

Consider a set of points \Po. We first compute the three longest ears of \Po. If the two longest ears of \Po are disjoint then we add the second longest to the output triangulation $\Tr_o$. Otherwise we add the third longest edge to $\Tr_o$. Let $i$ be the index of the ear $e_i$ that we just added to $\Tr_o$. Now, we remove $p_i$ from \Po, and proceed inductively until we reach a pentagon where a brute force calculation can easily be done.

To prove that this algorithm has the correct output and to compute
running time, we will need some notation. We denote by $P_k$ the set of
$k$ points obtained after $n-k$ steps of the algorithm where
$n\geq k\geq 5$.
We relabel the remaining points from 0 to $k-1$.
We denote by $e_i^k$ the ear in position $i$ in $P_k$. We denote by $(se_0^k,\cdots,se_{n-k}^k)$ the sorted list of the ears of $P_k$ such that $se_0^k>\cdots>se_{n-k}^k$.

\begin{prop}
The triangulation obtained is the unique Delaunay triangulation: $\Tr_o$=\DT.
\end{prop}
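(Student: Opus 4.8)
The plan is to argue by induction on the number $n$ of points. The base case is the pentagon $n=5$, where the brute-force computation returns the (unique) angle-maximal triangulation. Uniqueness at every stage is guaranteed by Theorem~\ref{thm:uniqueness}, since the standing hypothesis that no two diagonals share a length is stronger than the absence of symmetric quadruples; note also that this hypothesis is inherited by every subpolygon, as the pairwise distances among a subset of the points are still all distinct. The inductive step will consist of two claims: that the ear $e=[p_{i-1}p_{i+1}]$ added by the algorithm necessarily belongs to \DT, and that $\DT=\{e\}\cup DT(P')$, where $P'=P\setminus\{p_i\}$ is obtained by deleting the apex of $e$. Granting these, the induction hypothesis applied to the $(n-1)$-gon $P'$ gives $\Tr_o=\{e\}\cup DT(P')=\DT$.

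First I would record the elementary combinatorial fact that two ears $e_i$ and $e_j$ cross if and only if their indices are consecutive, $j\equiv i\pm1 \pmod n$; otherwise they are disjoint or meet in a single endpoint. Using this I claim the algorithm adds exactly the shortest ear of a maximal pair of $E$. If the two longest ears $se_0^k$ and $se_1^k$ are disjoint, then $\{se_0^k,se_1^k\}$ is the only disjoint pair both of whose members have length at least $se_1^k$, so it is maximal with shortest ear $se_1^k$, which is precisely what the algorithm adds. If instead $se_0^k$ and $se_1^k$ cross, they are consecutive, so the third longest ear $se_2^k$ cannot be consecutive to both and is thus disjoint from at least one of them; meanwhile no disjoint pair can have both members strictly longer than $se_2^k$, as that would force both into the crossing pair $\{se_0^k,se_1^k\}$. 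Hence every maximal pair has shortest ear of length $se_2^k$, and under the no-equal-length hypothesis this pins the ear down uniquely as $se_2^k$, again matching the algorithm.

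Next I would deduce $e\in\DT$. By Proposition~\ref{prop:path} the dual of \DT is a path, so \DT has exactly two ears, and by Lemma~\ref{lem:earDT} these two ears form a maximal pair of $E$. Since distinct diagonals have distinct lengths, all maximal pairs of $E$ share the same shortest ear, namely $e$; therefore $e$ is one of the two ears of \DT, and in particular $e\in\DT$. For the decomposition, observe that any triangulation of $P$ containing the diagonal $e$ must contain the ear triangle $p_{i-1}p_ip_{i+1}$ and triangulate $P'$ on the other side, and conversely $[p_{i-1}p_{i+1}]$ is a chord of $P'$; thus such triangulations are exactly those of the form $\{e\}\cup(\text{triangulation of }P')$, and their sorted diagonal-length lists are obtained by inserting the fixed value $\ell(e)$ into the length list of the $P'$-part. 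A short check shows that inserting a fixed value into two sorted lists preserves their lexicographic comparison, so maximizing the length list over triangulations of $P$ containing $e$ is equivalent to maximizing it over triangulations of $P'$, whose optimum is $DT(P')$ by Lemma~\ref{lem:diags}. As \DT contains $e$ and is length maximal over all triangulations of $P$, it is length maximal among those containing $e$, whence $\DT=\{e\}\cup DT(P')$, closing the induction.

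I expect the main obstacle to be the second paragraph: justifying through the crossing pattern of ears that the algorithm's choice of the second or third longest ear always coincides with the shortest ear of a maximal pair of $E$, including the verification that $se_2^k$ is disjoint from one of the two longest ears in the crossing case. By contrast, the decomposition step is conceptually routine once one verifies that inserting a fixed length preserves the lexicographic order of sorted lists.
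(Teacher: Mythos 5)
Your proof follows essentially the same route as the paper's: the paper likewise first observes that if an ear of $DT(P_k)$ has its apex $p_i$ removed then the restriction is $DT(P_k-p_i)$, and then uses Lemma~\ref{lem:earDT} with the same case analysis on whether the two longest ears cross to show the chosen ear lies in $DT(P_k)$ --- you merely spell out the restriction/insertion-of-a-fixed-length argument and the combinatorial crossing pattern of ears that the paper leaves implicit. The only (shared) imprecision is the word ``disjoint'': two ears may meet in a single endpoint without crossing, so your dichotomy ``disjoint vs.\ crossing'' silently drops that middle case and your claim that $se_2^k$ is \emph{disjoint} from one of the two longest ears should read \emph{non-crossing with} one of them; reading ``disjoint'' as ``non-crossing'' throughout (as the paper's own use of Lemma~\ref{lem:triear} requires) makes both arguments go through.
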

\begin{proof}
We first observe that if there is an ear $e_i$ belonging to $DT(P_k)$,
then the restriction of $DT(P_k)$ to $P_k-p_i$ is $DT(P_k-p_i)$. We
want to show that the chosen ear $e_i^k$ at step $k$ belongs to
$DT(P_k)$. By Lemma~\ref{lem:earDT},  one of the maximal
pair of $E(P_k)$ is in $DT(P_k)$. If $se_0^k$ and $se_1^k$ are
non-crossing then $(se_0^k,se_1^k)$ is the unique maximal pair of
$E(P_k)$. If $se_0^k$ and $se_1^k$ cross then $(se_0^k,se_2^k)$ and
$(se_{\od{1}}^k,se_2^k)$ are the two possible maximal pairs of
$E(P_k)$ and at least one is non crossing.
In the
latter case, $se_2^k$ is the only ear that surely belongs to
$DT(P_k)$. This proves that in all cases, the simplified algorithm
choose an ear that belongs to $DT(P_k)$. 
\end{proof}

Incidentally, this proposition proves the uniqueness of \DT when \Po does not admits two diagonals of the same length using a constructive proof.

\begin{prop}\label{prop:comp}
The simplified algorithm runs using $O(n)$ arithmetic operations.
\end{prop}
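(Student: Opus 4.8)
The plan is to distinguish sharply between the two kinds of work the algorithm performs: genuine \emph{arithmetic} on the coordinates of the points, namely the evaluation of ear lengths, and purely \emph{combinatorial} bookkeeping, namely comparisons of already-computed lengths, the test of whether two ears cross, and the maintenance of the cyclic order of the surviving points. In the algebraic model only the first kind is counted as an arithmetic operation, a comparison of two stored scalars being a separate (combinatorial) test, so it suffices to show that the algorithm evaluates $O(n)$ ear lengths, each at a cost of $O(1)$ field operations.

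First I would bound the number of iterations. The algorithm deletes exactly one point per step and stops at a pentagon, hence it runs for $n-5$ steps before a constant-size final computation, that is $O(n)$ steps.

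The heart of the argument is a locality claim that I would isolate as the first substantial step: deleting the apex $p_i$ of the selected ear only disturbs a constant number of ears. Indeed $e_i=[p_{i-1}p_{i+1}]$ ceases to be a diagonal, since $p_{i-1}$ and $p_{i+1}$ become consecutive and it turns into a chord of the smaller polygon, while its two neighbouring ears are replaced by $[p_{i-2}p_{i+1}]$ and $[p_{i-1}p_{i+2}]$; every other ear, together with its stored length, is untouched. Granting this, the algorithm computes $n$ lengths at initialization and only two new lengths per step, for a total of $n+2(n-5)=O(n)$ length evaluations. By Lemma~\ref{lem:diags} the algorithm only ever needs to compare diagonals by length; I would store the squared Euclidean length of each ear when it is created, at a cost of $O(1)$ field operations, and perform all subsequent length comparisons on these stored scalars. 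Since only $O(n)$ ears are ever created, this yields the announced $O(n)$ arithmetic operations.

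The last thing I would check is that producing, at each step, the three longest ears and the crossing test of the two longest introduces no further arithmetic. The crossing test merely compares apex indices, hence is combinatorial; the three longest ears are read off a priority structure keyed by the lengths already computed, into which the removal of $p_i$ triggers only a constant number of insertions and deletions. All of this rests on comparisons of precomputed scalars and on pointer updates in a doubly linked list of the surviving points, none of which touches the coordinates, so none of it contributes to the arithmetic count. The main obstacle, and the step deserving the most care, is precisely the locality claim: one must verify that a single deletion invalidates only a bounded number of ears, since this is what keeps the number of length evaluations, and therefore the number of arithmetic operations, linear.
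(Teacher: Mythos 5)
There is a genuine gap, and it sits exactly where you delegate the selection of the three longest ears to ``a priority structure keyed by the lengths already computed.'' Your locality claim (each deletion destroys and creates only $O(1)$ ears, so only $O(n)$ lengths are ever evaluated) is correct and is indeed half of the paper's argument. But it is not sufficient: knowing that only $O(1)$ ears changed does not by itself tell you how to recover the new top three without either re-scanning the surviving ears or maintaining a global ordered structure. A priority queue over up to $n$ ears costs $\Theta(\log n)$ comparisons per insertion/deletion, so your scheme performs $\Theta(n\log n)$ comparisons overall. You escape this only by declaring comparisons of stored scalars to be ``combinatorial'' and hence free, but that accounting is not the one the statement uses: a comparison is a subtraction plus a sign test, the paper explicitly counts the initial computation of the three longest ears as $O(n)$ \emph{operations} and labors to make each subsequent selection cost $O(1)$. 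Under your convention the proposition would be essentially vacuous (even re-scanning all ears at every step would be ``free''), which is a strong sign the convention is not the intended one.

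The missing idea is the paper's constant-size candidate set. Writing $p_i$ for the deleted apex, the two new ears $[p_{i-2}p_{i+1}]$ and $[p_{i-1}p_{i+2}]$ are \emph{longer} than the chosen ear $[p_{i-1}p_{i+1}]$ (which cannot contain the center, so widening its arc lengthens the chord), and at least one of the three longest ears of $P_{k+1}$ survives into $P_k$ (the three disappearing ears pairwise share endpoints, while the top two ears, when the third longest is chosen, must cross, so they cannot both disappear). From these two facts one shows that the three longest ears of $P_k$ lie in the set consisting of the two new ears together with the survivors of the previous top three --- at most five candidates --- so each step costs $O(1)$ comparisons with no global data structure at all. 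You should replace the appeal to a priority structure by this argument; your length-evaluation count can then stand as is.
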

\begin{proof}
We first compute the three longest ears of the input polygon on $P_n$. Since there are $n$ ears, it can be computed using $O(n)$ operations. In other words, finding the first ear requires $O(n)$ operations. We want to show that the choices of subsequent ears only require a constant number of operations at each step. We need to update the list of the three longest ears. 

Let $p_i$ be the point of $P_{k+1}$ removed at step $k+1$. Note that $e_i$ cannot contain the origin of the circle since such an ear cannot appear as the smallest ear of a pair.
The ears of $P_k$ are almost the same as the ears of $P_{k+1}$.
Actually, three ears disappear :  $p_{i-2}p_i$, $p_{i-1}p_{i+1}$, and $p_{i}p_{i+2}$ 
and two ears appear:  $p_{i-2}p_{i+1}$ and $p_{i-1}p_{i+2}$.
Since these two ears are longer than the chosen ear $p_{i-1}p_{i+1}$
and at least one of the three longest ear of $P_{k+1}$ remains an ear
of $P_k$ we can guarantee that the three longest ear of $P_k$ must be
chosen in $p_{i-2}p_{i+1}$, $p_{i-1}p_{i+2}$ and the ears remaining
amongst the three longest ear of $P_{k+1}$. Thus selecting these three ears
is done in constant time.
\end{proof}

\section{General Case}\label{sec:tech}

If we allow ears of equal length but no symmetric quadrilaterals,
then two ears of equal length necessarily share a point 
(see Section~\ref{sec:uniqueness}) and thus there at most two of them
(at most three if $n\leq 6$).
 Instead of the three longest ears, we have to use {\it all}
the ears of the three longest possible lengths reachable by ears
$(l_0,l_1,l_2)$. 
We want to apply Lemma~\ref{lem:earDT} to find an ear
that belongs to \DT. We thus study the possible configurations for the
maximal pairs of $E$. The goal is to find an ear that belongs to all
maximal pairs or alternatively find a way to rank those pairs. Let us
start with some easy cases.  
\vspace{1em}
\\\noindent\textbf{Case 1: $|se_0|=|se_1|=l_0$.} 
Here $(se_0,se_1)$ is the unique maximal pair of ears. This implies that $d_0$ and $d_1$ are in \DT.
\vspace{1em}
\\\noindent\textbf{Case 2: $|se_0|=l_0$, $|se_1|=|se_2|=l_1$.} 
Hence $(se_0,se_1)$, $(se_0,se_2)$ and $(se_1,se_2)$ are the only possible maximal pairs.
If $se_0$ crosses both $se_1$ and $se_2$, then $(se_1,se_2)$ is the 
unique maximal pair.
If $se_0$ crosses only $se_1$ (resp. $se_2$) then
 $(se_0,se_2)$ (resp. $(se_0,se_1)$) is the unique maximal pair and
 there is a canonical choice. 
In the remaining case, if we choose to include $(se_0,se_1)$ we can find a triangulation where $|d_0|=l_1$ and $|d_1|>l_1$ using Lemma~\ref{lem:triear}. This implies that $(se_1,se_2)$ cannot be included in \DT since it is strictly worse than any such triangulation. Thus \DT contains $se_0$.
\vspace{1em}
\\The last case occurs when $|se_0|=l_0$, $|se_1|=l_1$ and $|se_2|=|se_3|=l_2$. It will require looking at the next possible steps to decide between maximal pairs.

\begin{lem}\label{lem:tech0}
Let $n>8$. When $|se_0|=l_0$, $|se_1|=l_1$ and $|se_2|=|se_3|=l_2$, we can always find a diagonal that belongs to \DT by considering a constant number of diagonals.
\end{lem}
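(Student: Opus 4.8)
The plan is to name one of the two polygon-ears that are the leaves of the dual path of \DT (Proposition~\ref{prop:path}) after inspecting only a bounded neighbourhood of \(se_0,se_1,se_2,se_3\). I would first record the combinatorial constraints that make the candidate set finite: two ears cross exactly when they are consecutive (\(e_i\) and \(e_{i+1}\)), whereas the two equal-length ears share a vertex and hence sit at index-distance two, say \(se_2=e_a\) and \(se_3=e_{a+2}\). By Lemma~\ref{lem:earDT} the two leaves of \DT form a maximal pair of \(E\); since the only ears of length at least \(l_2\) are \(se_0,se_1,se_2,se_3\), both leaves lie in this four-element set, which leaves \(O(1)\) candidate pairs whose crossing pattern is computed in constant time.

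I would then clear the easy configurations. If \(se_0\) and \(se_1\) do not cross, \((se_0,se_1)\) is the unique maximal pair: its shorter ear has length \(l_1>l_2\), while any other pair of distinct ears uses at most one of \(se_0,se_1\) and hence has shorter ear of length at most \(l_2\); Lemma~\ref{lem:earDT} then forces \(se_1\in\)\DT. If instead \(se_0\) and \(se_1\) cross, the best achievable shorter ear has length \(l_2\), so the shorter leaf of \DT is \(se_2\) or \(se_3\) and the longer leaf is \(se_0\) or \(se_1\) (the case of two length-\(l_2\) leaves is excluded for \(n>8\), when the two leaves cannot share a vertex). I would list the non-crossing candidate pairs among \(\{se_2,se_3\}\times\{se_0,se_1\}\); whenever they share a common ear, that ear lies in every maximal pair, hence in \DT by Lemma~\ref{lem:earDT}, and we output it.

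The delicate case is when the surviving candidate pairs share no ear, typified by \(\{(se_2,se_0),(se_3,se_1)\}\); then \DT is singled out only by a finer comparison. Since \(d_0=l_2\) in either branch, I would break the tie by tentatively committing to each candidate shorter ear, deleting its apex, and comparing the two length-maximal completions through Lemma~\ref{lem:diags}. What should keep this local is that deleting the apex of \(se_2\) and that of \(se_3\) produce polygons that coincide outside the constant-size window around the shared vertex \(p_{a+1}\): they differ only through the deleted apex, the surviving twin (still an ear of length \(l_2\)), and the two freshly created ears. Hence the comparison is driven by the ears of the first few lengths near this window, and, as every length is realised by at most two ears, each level of the comparison involves only a bounded number of diagonals.

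The step I expect to be the genuine obstacle is proving that this look-ahead is at once correct and terminating after \(O(1)\) diagonals, i.e.\ that the tie cannot cascade. The danger is precisely that the value \(l_2\) reappears one step later, since the twin of the deleted ear survives with length \(l_2\), so a priori the first strictly decided position could drift away from the window. I would tame this by a finite case analysis over the \(O(1)\) crossing patterns of \(se_0,se_1,se_2,se_3\) and of the ears created when an apex is removed, exhibiting in each case an explicit diagonal forced into \DT; the strict chain \(l_0>l_1>l_2\) together with the at-most-two-ears-per-length bound is what should guarantee that, once we look past the single repeated value \(l_2\), a strict decision is available and the number of inspected diagonals stays constant.
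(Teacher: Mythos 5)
Your overall strategy coincides with the paper's: dispose of the case where $se_0$ and $se_1$ do not cross by outputting $se_1$; observe that otherwise the two ears of \DT must pair one of $\{se_2,se_3\}$ (which share a vertex, hence cannot both be leaves) with one of $\{se_0,se_1\}$; and break the remaining tie by a one-step look-ahead comparing the second diagonal obtained after committing to $se_2$ versus $se_3$ (the paper's $event_2$ and $event_3$). However, the proposal is not a proof: the entire content of the lemma lies in the step you yourself flag as ``the genuine obstacle'', namely that this look-ahead terminates after one level and always yields a forced diagonal, and you only promise ``a finite case analysis over the $O(1)$ crossing patterns'' without carrying it out. The paper does carry it out, and it is not automatic. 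Writing $se_3=e_1$ and $se_2=e_3$, the candidate second diagonals are $[p_1p_4],[p_2p_5],se_0,se_1$ in one branch and $[p_{-1}p_2],[p_0p_3],se_0,se_1$ in the other; the four bracketed diagonals have pairwise distinct lengths (equal-length disjoint diagonals would form a symmetric quadruple), and if both branches would select $se_0$ (or both $se_1$) that ear is in \DT outright. The only residual ambiguity is a tie between one of $se_0,se_1$ and one of the four bracketed diagonals, which pins $se_0,se_1$ to the positions $e_4,e_5,e_6$ up to the symmetry exchanging $se_2$ and $se_3$, and the paper then exhibits a forced ear in each of these configurations separately, using $n>8$ to prevent wrap-around and to exclude three equal lengths among the six relevant diagonals.

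A second, related gap: you claim the comparison after deletion is ``driven by the ears of the first few lengths near this window'' of the shared vertex, but the decisive phenomenon is that $se_0$ or $se_1$ may itself be destroyed in one branch and not the other (e.g.\ $se_0=e_4=[p_3p_5]$ loses its endpoint $p_3$ when $e_3$ is selected), and that the surviving twin of length $l_2$ may or may not be forced into both branches. Your window argument does not track which of the four long ears survive in which branch, and that is precisely what the paper's Cases 1--4 resolve; your worry that ``the tie could cascade'' is answered there not by a general non-cascading principle but by explicit enumeration showing one level of look-ahead suffices. So the reduction to finitely many configurations is correctly identified, but the lemma is not proved until those configurations are actually checked.
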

\begin{proof}
If $se_0$ and $se_1$ are non-crossing then we just choose $se_1$. We \smallskip
now assume \smallskip
\noindent\begin{minipage}{0.35\textwidth} 
	\includegraphics[width=\textwidth,page=6]{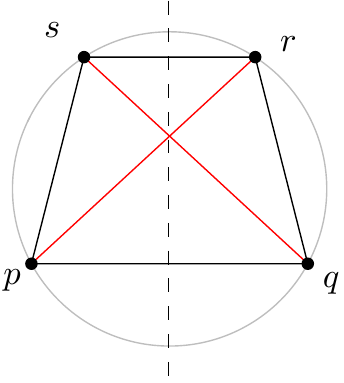}
\end{minipage}\hfill\begin{minipage}{0.64\textwidth}
that $se_0$ and $se_1$ cross. We choose the indices of the
points such that $se_3=e_1$ and $se_2=e_3$.
Since $n>8$, it is not possible that $se_0$,
$se_1$, $[p_1p_4]$, $[p_2p_5]$, $[p_{-1}p_2]$ and $[p_0p_3]$ contain
three diagonals of the same length. We want to go a step further to
choose the best choice between $se_2$ and $se_3$. The one leading to
the best second diagonal leads to a better $d_1$ with the same $d_0$
and so must be in \DT. 

The next choice of $se_2$ has to be made between $[p_1p_4]$,
$[p_2p_5]$, $se_0$ and $se_1$ because all of the other ears are
shorter than $se_2$. For $se_3$, we have $[p_{-1}p_2]$, $[p_0p_3]$,
\end{minipage}
$se_0$ and $se_1$. Furthermore $[p_1p_4]$, $[p_2p_5]$, $[p_{-1}p_2]$
and $[p_0p_3]$ must have 4 different lengths since we forbid symmetric
quadruples. If there is still an ambiguity, this means that the second
possible diagonal must have the same length in the two cases. In
addition, if the next diagonal chosen in both situations is $se_0$,
then $se_0$ is in \DT. The same holds for $se_1$. It implies that the
only ambiguous case left is when one of $se_0$ and $se_1$ has the same
length as one of $[p_1p_4]$, $[p_2p_5]$, $[p_{-1}p_2]$ or
$[p_0p_3]$. We call $event_i$ the remaining possibilities where we
choose $se_i$ as a first ear. 
\vspace{1em}
\\ \noindent\textbf{Case 1: $se_0=e_4$ and $se_1=e_5$ or $se_0=e_5$
  and $se_1=e_4$.} In $event_2$ the next choice is between $[p_1p_4]$,
$[p_2p_5]$ and $e_5$ since $e_4$ is no longer reachable. Since
$[p_2p_5]$ crosses the two others, the unique maximal pair is
$([p_1p_4],e_5)$. If $[p_1p_4]$ is the smallest then $event_3$ cannot
reach the same length because $[p_1p_4]$ crosses $e_4$ and is smaller
than $e_5$. If $e_5$ is the smallest, then $event_3$ has no other
solution but to include $e_5$ as well (none of its eligible ears have
a common point with $e_5$) and thus $e_5$ is in \DT. If they have the
same length then $e_5$ is also in \DT. 
\vspace{1em}
\\\noindent\textbf{Case 2: $se_0=e_5$ and $se_1=e_6$.} In $event_3$,
$se_0$ and $se_1$ are disjoint from $[p_{-1}p_2]$ and $[p_0p_3]$. This
implies that $[p_{-1}p_2]$ and $[p_0p_3]$ have different lengths than
$se_0$ and $se_1$. So, the only possible ambiguity is if the second
choice is $se_0$ or $se_1$. Since a pair with $se_1$ is beaten by a
pair with $se_0$, $se_0$ needs to be chosen. In $event_2$, we can have
$|se_0|=|p_1p_4|$. However, in this case $(se_0,[p_1p_4])$ is the
unique maximal pair and both ears are selected. So, in this case,
$se_0$ is in \DT. 
\vspace{1em}
\\\noindent\textbf{Case 3: $se_0=e_6$.} For the same reason as in the
previous case, we can assume that $event_3$ selected $se_0$. In
$event_2$, we can have now $|se_0|=|p_2p_5|$. If $(se_0,[p_2p_5])$ is
the only maximal pair we are done. So we need to have
$|p_1p_4|>|p_2p_5|$. In this case, $(se_0,[p_1p_4])$ is the only
maximal pair and $se_0$ is in \DT in both events. 
\vspace{1em}
\\\noindent\textbf{Case 4: $se_0=e_j$ with $j>6$.} $se_1$ cannot appear in any maximal pair since $se_0$ is always better so the only equality holds when both events use $se_0$ and in this case, we include $se_0$ in \DT.

By symmetry between $se_2$ and $se_3$, we have no more cases and this concludes the proof.
\end{proof}

The last case is for $n\le8$. 

\section{Extended Algorithm}\label{sec:extalgo}

\paragraph{Algorithm for triangulations without any symmetric quadruples.}
We want to extend the simplified algorithm of Section~\ref{sec:algo}. First, we see from the last section that we need to consider the four longest ears instead of the three longest. The step that changes is the selection of the ear that we put in \DT at each step. We describe it in detail. Firstly, if $n\le8$, when we have more than one possible ear to put in \DT, we just try all possibilities until one becomes strictly better than all the others. We assume that $n>8$. Let $(se_0,se_1,se_2,se_3)$ be the sorted list of the four longest ears of the current polygon. By considering the cases of the last section, we obtain the following procedure.
\begin{itemize}
\item If $|se_0|>|se_1|>|se_2|>|se_3|$ just apply the same rules as in the simplified version.
\item Else if $|se_0|=|se_1|$, we put any one (or both) in \DT.
\item Else if $|se_0|>|se_1|=|se_2|$ then we put $se_0$ in \DT.
\item Else if $|se_0|>|se_1|>|se_2|=|se_3|$, apply:
\begin{itemize}
\item If $se_0$ and $se_1$ are non-crossing, put $se_1$ in \DT.
\item Else look at the length of the next diagonal coming after the choice of $se_2$ or $se_3$.
\begin{itemize}
\item If one is strictly longer, put the corresponding $se_i$ in \DT.
\item If they are equal, then the length of the second edge must be the length of $se_0$ and $se_1$, so put the corresponding ear in \DT.
\end{itemize}
\end{itemize}
\end{itemize}

This algorithm works by Lemma~\ref{lem:tech0} and runs in $O(n)$ time by Lemma~\ref{lem:tech0} and~\ref{prop:comp}.

\paragraph{Multiple output algorithm.}
We want to have an answer even in degenerate cases. We first
  describe an algorithm that outputs all the optimal
  triangulations. We proceed as follows: if we have many admissible
  ears, then we create as many triangulations as the number of
  admissible ears. We carry on constructing a tree of possibly optimal
  triangulations and we check at each layer which one is the best. If
  there is more than one possibility, we keep them all. We first
  consider the critical case of the regular polygons. 
\begin{lem}\label{l:pol-reg}
If $P$ corresponds to a regular polygon then it admits $n\cdot2^{n-5}$ triangulations as \DT.
\end{lem}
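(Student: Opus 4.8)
The plan is to identify the max-min angle Delaunay triangulations of the regular $n$-gon with the triangulations whose dual is a path, and then to count the latter. By Lemma~\ref{lem:diags} it is equivalent to work with length-maximal triangulations. The proof then splits into three parts: every length-maximal triangulation is a path triangulation (a degenerate version of Proposition~\ref{prop:path}), every path triangulation is length-maximal, and there are exactly $n\cdot 2^{n-5}$ path triangulations.

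First I would check that the proof of Proposition~\ref{prop:path} survives the heavy degeneracy of the regular polygon. Every ear of a regular $n$-gon has the same length $l_0$, so a triangulation with three ears has sorted diagonal list beginning $(l_0,l_0,l_0,\dots)$. On the other hand, for any two disjoint ears Lemma~\ref{lem:triear} yields a triangulation in which these are the only diagonals of length $l_0$ (all others span at least three edges, hence are strictly longer for $n\ge 6$), so its sorted list is $(l_0,l_0,d,\dots)$ with $d>l_0$ and is strictly larger. Thus a length-maximal triangulation has at most two ears, and since every triangulation of a convex polygon has at least two, it has exactly two and its dual is a path. The small cases $n\le 5$, where every triangulation is already a path triangulation, are checked directly.

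The crux is to show that \emph{all} path triangulations of the regular polygon share the same sorted list of diagonal lengths. For this I would use the sweep description of a path triangulation: starting from one ear $T_1$, one repeatedly glues a triangle onto the current frontier diagonal, each new triangle using exactly one polygon edge. If the untriangulated region is the sub-chain $w_0,\dots,w_m$ of consecutive vertices closed by the frontier diagonal $[w_0w_m]$, the next triangle is $w_0w_1w_m$ or $w_0w_{m-1}w_m$; either choice replaces the frontier by a diagonal closing a chain with one fewer edge. The successive frontiers are exactly the $n-3$ diagonals of the triangulation, and the $k$-th of them closes a chain of $n-1-k$ edges, hence on the regular circle has length determined solely by the gap $\min(n-1-k,\,k+1)$, which does not depend on the left/right choices. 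Therefore every path triangulation has the same multiset of diagonal lengths. Combined with the first part, this common list is the maximal one, so every path triangulation is length-maximal, i.e.\ max-min angle Delaunay.

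Finally, the same sweep counts the path triangulations: there are $n$ choices for the tip of the starting ear and then an independent left/right choice at each of the $n-4$ intermediate gluings (the last triangle being forced), giving $n\cdot 2^{n-4}$ sweeps. Each path triangulation is produced by exactly two sweeps, one from each of its two ears (these two sweeps being reverses of one another with left and right interchanged), so the count is $n\cdot 2^{n-4}/2=n\cdot 2^{n-5}$. I expect the main obstacle to be the invariance argument of the third paragraph — getting the frontier/gap bookkeeping exactly right so that the length multiset is manifestly independent of the choices — together with justifying the degenerate form of Proposition~\ref{prop:path} and the factor-two overcount; the small-$n$ boundary cases also need a separate check.
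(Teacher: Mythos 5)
Your proposal is correct and follows essentially the same route as the paper: show that every path triangulation of the regular $n$-gon has the same sorted multiset of diagonal lengths (by walking along the dual path and tracking the span of the successive frontier diagonals), conclude via Proposition~\ref{prop:path} that these are exactly the optimal triangulations, and count them as $n\cdot 2^{n-4}/2$. Your version is somewhat more explicit than the paper's about why the at-most-two-ears argument survives the degeneracy and about the frontier bookkeeping, but the underlying argument is the same.
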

\begin{proof}
Let $P$ be a set of $n$ points forming a regular polygon contained in
a circle of radius 1. We first prove that all the triangulations that
are dual to a path have the same set of lengths and thus are optimal by
Proposition~\ref{prop:path}. Let $T$ be a triangulation whose dual is
a path. A diagonal $d$ have length $f(k)=2\sin(\frac{k\pi}{n})$ where
$k$ is the number of chords on the smallest side of $d$. $f$ is an
increasing function of $k$ since
$k\in[0,\lfloor\frac{n}{2}\rfloor]$. An ear has length $f(1)$ and then
the length of the next diagonal following the dual of $T$ have length
$f(2)$ and so on until we reach $f(\lfloor\frac{n}{2}\rfloor)$ (there
is one or two of this length depending on the parity of $n$) and then
the diagonals length decrease to $f(1)$. So, the sorted set of
diagonal length of $T$ is
$(f(1),f(1),f(2),f(2),\cdots,f(\lfloor\frac{n}{2}\rfloor)(,f(\lfloor\frac{n}{2}\rfloor)))$
and is independent of $T$. 

It remains to count the number of different triangulations dual to a
path. Let us choose a random ear $e$. Then we have to choose the next
diagonal after $e$ and so on until we reach another ear, at each step,
we can choose between two distinct edges. It means that we have $2^{n-4}$
triangulations containing $e$, since a triangulation has $n-3$ edges. There are $n$
distinct possible choices for $e$ but we construct all the
triangulations exactly twice (one for each of its ears). We
obtain $n\cdot2^{n-4}/2=n\cdot2^{n-5}$ distinct triangulations. 
\end{proof}

The complexity of the algorithm is directly linked to the number of
possible \DT and this number is controlled by the number of symmetric
quadruples of $P$. More precisely: 

\begin{prop}
A set of $n$ concyclic points with less than $k$ distinct symmetric quadruples  
admits $h$ different Delaunay triangulations with $h=O(2^{k})$.
These triangulations can be enumerated in $O(nh)$ time.
\end{prop}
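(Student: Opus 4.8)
The plan is to analyze the multiple-output algorithm described above as a rooted tree of partial triangulations and to bound its number of leaves. The root is the input polygon \Po; at a node corresponding to a current sub-polygon $P_m$ with $m>8$ we apply the selection procedure, creating one child per admissible ear (the ears the procedure allows to insert into \DT) and recursing on $P_m$ minus the apex of the chosen ear, while nodes with $m\le 8$ are leaves resolved by brute force. By the bijection already used in the simplified algorithm — namely that if an ear $e_i$ lies in $DT(P_m)$ then the restriction of $DT(P_m)$ to $P_m-p_i$ equals $DT(P_m-p_i)$ — the leaves of this tree are in bijection with the Delaunay triangulations of \Po, so it suffices to count them.

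The core of the argument is a charging of branch nodes to symmetric quadruples. I would first argue that, for $m>8$, the procedure produces more than one child only in the first case of the extended algorithm, when the two longest ears $se_0$ and $se_1$ have equal length and cross; in every other case the procedure either forces a single ear or inserts a disjoint pair belonging to all optimal triangulations, and so does not branch. Since for $m>8$ there are at most two ears of any given length, such a branch is binary. Moreover, two crossing ears of equal length are of the form $[p_{i-1}p_{i+1}]$ and $[p_ip_{i+2}]$, so the four consecutive points $p_{i-1},p_i,p_{i+1},p_{i+2}$ have equal diagonals and form a symmetric quadruple $Q$ containing the apex of each of the two admissible ears. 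Consequently, whichever child the branch is resolved into, the point removed at this step lies in $Q$, so $Q$ disappears from the sub-polygon; since removing a point can only destroy symmetric quadruples and never create one, the number of symmetric quadruples strictly decreases at every branch node.

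From here the count follows by a per-path argument. Along any root-to-leaf path each branch node strictly decreases the number $q$ of remaining symmetric quadruples, and a destroyed quadruple cannot reappear further down the same path; hence a path passes through at most $q(\Po)\le k-1$ branch nodes, each binary. A binary tree in which every root-to-leaf path meets at most $t$ branch nodes has at most $2^{t}$ leaves, so $h\le 2^{k-1}=O(2^{k})$, with the finitely many $m\le 8$ base cases absorbed into the constant. This is consistent with Lemma~\ref{l:pol-reg}: the regular $n$-gon has polynomially many symmetric quadruples, comfortably exceeding the $n-5+\log_2 n$ forced by its $n\cdot 2^{n-5}$ triangulations.

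For the running time, the tree has depth at most $n-5$ and $h$ leaves, hence $O(nh)$ nodes. The four longest ears of the root are computed once in $O(n)$ operations, and exactly as in Proposition~\ref{prop:comp} this list is updated in $O(1)$ time after each ear removal; at a branch node we simply duplicate this constant-size state. Each node therefore costs $O(1)$ and the whole enumeration runs in $O(nh)$ time. The main obstacle is the step asserting that the equal-crossing-longest-ears situation is the only source of branching: one must rule out that the deeper tie-resolution cases of Lemma~\ref{lem:tech0} (where $|se_2|=|se_3|$ and a look-ahead is required) secretly admit two distinct optimal continuations, and confirm that in every genuine branch the charged quadruple is freshly destroyed rather than one already consumed higher in the path.
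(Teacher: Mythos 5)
Your overall strategy---viewing the multiple-output algorithm as a rooted tree of configurations, bounding the number of leaves by charging branch nodes to symmetric quadruples, and obtaining $O(nh)$ time from constant work per node---is the same as the paper's. The genuine gap is the step you yourself flag at the end: you assert that every branch node is binary, justified by ``for $m>8$ there are at most two ears of any given length.'' That bound is established at the start of Section~\ref{sec:tech} only under the hypothesis that there are \emph{no} symmetric quadrilaterals, which is precisely the hypothesis this proposition drops: here $P$ may contain up to $k$ symmetric quadruples, and then arbitrarily many ears can share a length. The regular $n$-gon of Lemma~\ref{l:pol-reg}, which you invoke as a consistency check, is the counterexample: all $n$ ears are equal, the very first decision of the multiple-output algorithm is $n$-ary, and no tree whose branches are all binary with at most $k-1$ of them per root-to-leaf path can be the tree that algorithm builds. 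A second, smaller flaw is the claim that the charged quadruple $Q$ is destroyed because the removed apex lies in $Q$: this is true for two \emph{crossing} equal ears $[p_{i-1}p_{i+1}]$ and $[p_ip_{i+2}]$, but for two \emph{disjoint} equal ears $e_i,e_j$ the associated symmetric quadruple is the isosceles trapezoid $\{p_{i-1},p_{i+1},p_{j-1},p_{j+1}\}$, which does not contain the removed apex $p_i$; so the invariant ``the number of remaining quadruples strictly decreases at every branch node'' is not established in the form you need.

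The paper's proof avoids both issues by handling the general branching factor head on: if a decision involves $k_0$ ears of the same length, these ears are pairwise crossing or disjoint and therefore already force on the order of $k_0(k_0-1)$ distinct symmetric quadruples; multiplying the number of configurations by $k_0$ thus consumes roughly $\log_2 k_0$ of the budget of $k$ quadruples, and the product of branching factors is maximized, at $2^k$, when the quadruples are spent one per binary branch. To repair your argument you would need to (i) allow $k_0$-ary branches and charge each one $\Omega(\log k_0)$ quadruples, and (ii) show that the quadruples charged at distinct branch nodes are distinct (e.g.\ because each charged quadruple has at least one of the $k_0$ equal ears as a side or diagonal, and those ears cease to exist lower in the tree), rather than relying on the claim that the charged quadruple's point set physically disappears. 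The complexity half of your argument (constant-time update of the candidate-ear list as in Proposition~\ref{prop:comp}, $O(nh)$ nodes in the tree) matches the paper and is fine.
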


\begin{proof}
Let us assume that a decision made by our algorithm involves $k_0
$ ears of the same length. Then, we have $k_0$ different ears of the same size
such that all the pairs are crossing or disjoint 
(but not sharing a point) and so there are necessarily at least
$k_0(k_0-1)$ symmetric quadruples. Now, at each step of the algorithm,
we have a number of equivalent sets of diagonals and we want to find
all the possibles extensions. Since all the sets have to be
equivalent, we must always add ears of the same length to all the
current constructions so $k_0(k_0-1)$ symmetric quadruples may be used
to multiply the number of possible configurations by $k_0$. Until we
reach $n\cdot2^{n-5}$ possibilities the bigger number that we can
obtain is $2^k$ by using each quadruples separately. It proves that
$h=O(2^k)$. 

To actually construct the triangulations, we only need a constant number of operations on each node of the tree
of configurations. A tree as
a linear number of nodes with respect to its number of leaves so the
complexity of the algorithm is $O(nh)$. 
\end{proof}

\paragraph{Single output algorithm.}
For computational applications, it is important to have an
  algorithm that build always the same triangulation on a given set of
  points $P$. Perturbations are not useful because if you have two
  triangulations with the sorted length of diagonals
  $(a,b,\cdots)<(a',c,\cdots)$ with $a=a'$ then a perturbation
   may perturb the lengths $a$ and $a'$ so that $a'> a$, yielding
    an incorrect result.
If the frame of coordinates is fixed, we may use it to pick a
  unique triangulation amongst the optimal ones.
Let $p_0\in P$ be the smallest point for the lexicographic order
and  label the other points $p_i$ in counterclockwise direction
 starting at $p_0$. Now we can choose to set an order on the tree
  constructed by the previous algorithm as follows: if we
  cannot decide between a set of ears $(e_i,e_j,\cdots)$, we order the list
  by label and we put the smallest $e_i$ as the leftmost descendant of
  the previous node. The algorithm outputs the leftmost solution in
  $O(n\cdot2^k)$ time where $k$ is the number of symmetric quadruples
  of $P$. Note that it is not possible to only keep the leftmost
  descendant at each step since it can lead to a non optimal situation. 

\subsection*{Acknowledgements}

The authors thanks François Collet for preliminary discussions about
the problem we addressed in this paper.

\bibliographystyle{plainurl}

\end{document}